\documentclass[11pt]{article}
\usepackage{path,graphicx}
\usepackage{a4,amssymb, dsfont}
\usepackage{authblk}
\usepackage{amsmath}
\usepackage{lineno}
\usepackage{enumitem}
\usepackage{cite}
\usepackage[]{algorithm2e}
\usepackage[T1]{fontenc}
\usepackage[utf8]{inputenc}
\usepackage{authblk}
\usepackage[justification=centering]{caption}
\usepackage{etoolbox}
\patchcmd\thebibliography
 {\labelsep}
 {\labelsep\itemsep=0pt\relax}
 {}
 {\typeout{Couldn't patch the command}}
\newtheorem{definition}{Definition}[section]
\newtheorem{lemma}[definition]{Lemma}
\newtheorem{theorem}[definition]{Theorem}

\newtheorem{claim}[definition]{Claim}

\newcommand{\qed}{\hfill $\square$\smallskip}
\title{Target Set in Threshold Models}
\author{Ahad N. Zehmakan

abdolahad.noori@inf.ethz.ch}
\affil{Department of Computer Science, ETH Zurich}
\providecommand{\keywords}[1]{\textbf{\textit{Key Words:}} #1}
\date{} 
\begin{document}
\maketitle
\begin{abstract}
Consider a graph $G$ and an initial coloring, where each node is blue or red. In each round, all nodes simultaneously update their color based on a predefined rule. In a threshold model, a node becomes blue if a certain number or fraction of its neighbors are blue and red otherwise. What is the minimum number of nodes which must be blue initially so that the whole graph becomes blue eventually? We study this question for graphs which have expansion properties, parameterized by spectral gap, in particular the Erd\H{o}s-R\'{e}nyi random graph and random regular graphs.
\end{abstract}
\keywords{Target set, dynamic monopoly, threshold model, bootstrap percolation, expanders.}
\section{Introduction}
Assume that you are given a graph $G=(V,E)$ and an initial coloring, where each node is either blue or red. In discrete-time rounds, all nodes simultaneously update their color. In the \emph{$r$-threshold model} for some integer $r\ge 1$, a node becomes blue if it has at least $r$ blue neighbors and red otherwise. In the \emph{$\alpha$-threshold model} for some $0<\alpha<1$, a node becomes blue if at least $\alpha$ fraction of its neighbors are blue and red otherwise. (We assume that $r$ and $\alpha$ are fixed while we let $n$, the number of nodes in the underlying graph, tend to infinity.) 

In each of these two models, a set $T\subseteq V$ is called a \emph{target set} whenever the following holds: if $T$ is fully blue in some round, then the whole graph becomes blue eventually.\footnote{It is worth to stress that the concept of a target set, known also as \emph{dynamic monopoly}, can be defined more generally to capture similar models like \emph{bootstrap percolation} models, where a blue node remains blue forever, cf.~\cite{morrison2018extremal,hambardzumyan2017polynomial,zehmakan2018two}.} The minimum size of a target set has been studied extensively on different classes of graphs like lattice, hypercube, random graphs, planar graphs, regular graphs, and many more, cf.~\cite{balister2010random,gartner2018majority,peleg1997local,jeger2019dynamic,zehmakan2019tight,flocchini2004dynamic,gartner2017color}. We are interested in the graphs with good expansion properties.

There exist different parameters to measure the expansion of a graph. We consider an algebraic characterization of expansion. Assume that $A(G)$ is the adjacency matrix of graph $G=(V,E)$. Consider the normalized adjacency matrix $M=D^{-\frac{1}{2}}AD^{-\frac{1}{2}}$, where $D$ is the diagonal matrix such that the entries of the diagonal are the degrees of the nodes. Let $1=\lambda_1\ge \lambda_2\ge\cdots\ge\lambda_n\ge-1$ be the eigenvalues of $M$. We denote the second-largest absolute eigenvalue of the normalized matrix by $\sigma(G):=\max_{2\le i\le n}|\lambda_i|$. Graph $G$ has stronger expansion properties when $\sigma(G)$ is smaller. Here, we assume that $\sigma(G) <1$.

Consider the $\alpha$-threshold model on a $d$-regular $n$-node graph $G$. Let $B_t$ and $R_t$ denote the set of blue and red nodes in the $t$-th round and define $b_t:=|B_t|$, $r_t:=|R_t|$. We prove that if $b_0\ge\alpha n+\sqrt{2/(1-\alpha)}\,\sigma n$, the graph becomes fully blue and if $b_0\le \alpha n-\sqrt{2/\alpha}\,\sigma n$, it becomes fully red in $\mathcal{O}(\log_{1/\sigma^2}n)$ rounds.\footnote{This generalizes the results from~\cite{n2018opinion}.} Roughly speaking, for small values of $\sigma$ any set ``slightly'' larger (smaller) than $\alpha n$ is (not) a target set. Assume that initially each node keeps a piece of information, in our case one of the two colors, and we allow an adversary to corrupt almost half of the nodes. Our result asserts that in regular graphs with strong expansion properties if the nodes simply follow the $\alpha$-threshold model for $\alpha=1/2$ (that is, each node selects the majority color in its neighborhood, up to the tie-breaking rule), then they all will retrieve the original information in logarithmically many rounds. This is typically known as \emph{density classification property} and has application in distributed fault-local mending where redundant copies of
data are kept and the ``majority rule'' is applied to overcome the damage caused by failures, cf.~\cite{peleg1997local,peleg1998size}. We also study the minimum size of a target set in $r$-threshold model on $d$-regular graphs with good expansion properties. Moreover, we state that our results can be generalized to include irregular graphs.

From an algorithmic perspective, one might ask what is the minimum size of a target set for a given graph. It is known that this problem is NP-hard in both $r$-threshold and $\alpha$-threshold model for certain ranges of $\alpha$ and $r$, cf.~\cite{chen2009approximability,mishra2002hardness,mishra2006minimum}. %However, several polynomial and approximation algorithms for special classes of graphs have been provided. 
We consider the same problem for the minimum size of a stable set. It is shown that this problem is NP-hard in $\alpha$-threshold and $r$-threshold model for any $0<\alpha<1$ and $r\ge 3$. In a graph $G$, we say a node set $S$ is a \emph{stable set} when the following holds: if $S$ is fully blue in some round, it remains blue forever, regardless of the color of all other nodes. A blue target set results in the full disappearance of red color while a blue stable set only guarantees the survival of blue color. 

We present our results regarding the $\alpha$-threshold and $r$-threshold model on regular expanders respectively in Sections~\ref{alpha-BP} and~\ref{r-BP}. The extension of these results to irregular graphs is given in Section~\ref{irregular}. Finally, we provide our hardness results concerning the minimum size of a stable set in Section~\ref{complexity}.
\section{Threshold Models on Expanders}
For a node $v$ and a node set $S$ in a graph $G=(V,E)$, define $d_S(v):=|\{u\in S: \{v,u\}\in E\}|$ and let $d(v)=d_{V}(v)$ be the \emph{degree} of $v$. We assume that $\delta(G)$ and $\Delta(G)$ denote respectively the minimum and maximum degree in $G$ and define $e(A,B):=|\{(v,u):v\in A, u\in B, \{v,u\}\in E\}|$. Now, let us provide Lemma~\ref{mixing}, which is our main tool in this section. Roughly speaking, it states that if $\sigma$ is small, then the number of edges among every two node sets is almost completely determined by their size.  
\begin{lemma} (Expander mixing lemma, cf.~\cite{hoory2006expander})
\label{mixing} For a $d$-regular graph $G=(V,E)$ and $A,B\subset V$
\[
|e(A,B)-\frac{|A|\,|B|d}{n}|\le\sigma d \,\sqrt{|A|\,|B|(1-\frac{|A|}{n})(1-\frac{|B|}{n})}\,.
\]
\end{lemma}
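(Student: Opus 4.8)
The plan is to prove the Expander Mixing Lemma, which bounds how far the edge count $e(A,B)$ deviates from its ``expected'' value $|A||B|d/n$ in terms of the spectral gap parameter $\sigma$. The natural approach is purely linear-algebraic: express $e(A,B)$ as a quadratic form involving the adjacency matrix $A(G)$, and then decompose everything in the orthonormal eigenbasis of the normalized matrix $M$, isolating the contribution of the top eigenvalue $\lambda_1 = 1$ from the rest.

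First I would set up indicator vectors $\mathbf{1}_A, \mathbf{1}_B \in \mathbb{R}^n$ for the two node sets, and write $e(A,B) = \mathbf{1}_A^\top A(G)\, \mathbf{1}_B$. Since $G$ is $d$-regular we have $D = dI$, so $A(G) = d\,M$ and $e(A,B) = d\,\mathbf{1}_A^\top M\,\mathbf{1}_B$. The key structural fact in the $d$-regular case is that the all-ones vector is the top eigenvector of $M$ with eigenvalue $\lambda_1 = 1$; writing $v_1 = \mathbf{1}/\sqrt{n}$, the projection of any vector onto $v_1$ captures exactly the ``average,'' so projecting $\mathbf{1}_A$ gives the component $\langle \mathbf{1}_A, v_1\rangle v_1$ with $\langle \mathbf{1}_A, v_1\rangle = |A|/\sqrt{n}$, and similarly for $B$.

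Next I would expand $\mathbf{1}_A = \sum_i a_i v_i$ and $\mathbf{1}_B = \sum_i b_i v_i$ in an orthonormal eigenbasis $v_1,\dots,v_n$ of $M$. Then
\[
e(A,B) = d\sum_{i=1}^n \lambda_i\, a_i b_i = d\,\lambda_1 a_1 b_1 + d\sum_{i=2}^n \lambda_i\, a_i b_i.
\]
The first term equals $d\cdot 1\cdot(|A|/\sqrt{n})(|B|/\sqrt{n}) = |A||B|d/n$, which is precisely the main term. For the error I would bound $|\sum_{i\ge 2}\lambda_i a_i b_i| \le \sigma \sum_{i\ge 2}|a_i b_i| \le \sigma\big(\sum_{i\ge 2}a_i^2\big)^{1/2}\big(\sum_{i\ge 2}b_i^2\big)^{1/2}$ by the definition of $\sigma$ and Cauchy--Schwarz. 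The residual sums of squares are $\sum_{i\ge 2}a_i^2 = \|\mathbf{1}_A\|^2 - a_1^2 = |A| - |A|^2/n = |A|(1-|A|/n)$, and likewise for $B$; multiplying the two square roots by $d$ yields the stated bound exactly.

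The main obstacle is getting the error term into the tight form $\sigma d\sqrt{|A||B|(1-|A|/n)(1-|B|/n)}$ rather than a looser bound like $\sigma d\sqrt{|A||B|}$. The crucial point is to remove the top eigencomponent \emph{before} applying Cauchy--Schwarz, so that the residual norms pick up the $(1-|A|/n)$ and $(1-|B|/n)$ factors; subtracting $a_1^2 = |A|^2/n$ from $\|\mathbf{1}_A\|^2 = |A|$ is what produces the refinement. Everything else is routine bookkeeping with the eigenbasis and the two applications of Cauchy--Schwarz.
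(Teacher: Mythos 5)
Your proof is correct and complete: writing $e(A,B)=\mathbf{1}_A^\top A(G)\,\mathbf{1}_B = d\,\mathbf{1}_A^\top M\,\mathbf{1}_B$, splitting off the $v_1=\mathbf{1}/\sqrt{n}$ component to get the main term $|A||B|d/n$, and applying Cauchy--Schwarz to the residual coefficients (whose squared norms are exactly $|A|(1-|A|/n)$ and $|B|(1-|B|/n)$) yields precisely the stated inequality. The paper itself does not prove this lemma --- it imports it from the cited reference --- and your argument is the standard spectral-decomposition proof given there, so there is nothing to reconcile.
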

\subsection{$\mathbf{\alpha}$-Threshold Model}
\label{alpha-BP}

\begin{theorem}
\label{thm:alpha}
In the $\alpha$-threshold model on a $d$-regular graph $G=(V,E)$,

(i) if $b_0\le \alpha n-\sqrt{\frac{2}{\alpha}}\,\sigma n$, the graph becomes fully red in $\mathcal{O}(\log_{\frac{\alpha^2}{4\sigma^2}}n)$ rounds,

(ii) if $b_0\ge \alpha n+\sqrt{\frac{2}{1-\alpha}}\,\sigma n$, the graph becomes fully blue in $\mathcal{O}(\log_{\frac{(1-\alpha)^2}{4\sigma^2}}n)$ rounds.
\end{theorem}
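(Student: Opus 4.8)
The plan is to track how the red set shrinks from one round to the next and to show that it contracts geometrically once it drops below a threshold, using Lemma~\ref{mixing} as the engine. I will prove part (ii); part (i) follows by the symmetric argument applied to the blue set, swapping the roles of the two colors and replacing $1-\alpha$ by $\alpha$.

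First I would write down the one-round recursion for $r_t:=|R_t|$. A node $v$ is red in round $t+1$ exactly when fewer than $\alpha d$ of its neighbors are blue, i.e.\ when $d_{R_t}(v)>(1-\alpha)d$. Summing over all $v\in R_{t+1}$ gives $e(R_{t+1},R_t)=\sum_{v\in R_{t+1}}d_{R_t}(v)>(1-\alpha)d\,r_{t+1}$. On the other hand, Lemma~\ref{mixing} applied with $A=R_{t+1}$ and $B=R_t$ yields $e(R_{t+1},R_t)\le \frac{r_{t+1}r_t d}{n}+\sigma d\sqrt{r_{t+1}r_t}$, after bounding the two $(1-|\cdot|/n)$ factors by $1$. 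Combining the two estimates, dividing by $d\sqrt{r_{t+1}}$, rearranging, and squaring gives the clean inequality
\[
r_{t+1} < \frac{\sigma^2\,r_t}{\bigl((1-\alpha)-r_t/n\bigr)^2},
\]
valid whenever $(1-\alpha)-r_t/n>0$ and $r_{t+1}>0$.

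Next I would feed in the initial bound. Writing $\beta:=1-\alpha$, the hypothesis $b_0\ge\alpha n+\sqrt{2/\beta}\,\sigma n$ is exactly $r_0\le \beta n-\sqrt{2/\beta}\,\sigma n$, so $\beta-r_0/n\ge \sqrt{2/\beta}\,\sigma$ and the contraction factor at $t=0$ is at most $\sigma^2/((2/\beta)\sigma^2)=\beta/2<1/2$. Hence $r_1<(\beta/2)r_0<(\beta/2)(\beta n)=\beta^2 n/2<\beta n/2$. This is the key structural gain: after a single round the red fraction has fallen below $\beta/2$, so for every $t\ge 1$ we have $\beta-r_t/n\ge\beta/2$ and therefore
\[
r_{t+1} < \frac{\sigma^2}{(\beta/2)^2}\,r_t = \frac{4\sigma^2}{\beta^2}\,r_t.
\]
An induction then gives $r_t<(4\sigma^2/\beta^2)^{t-1}n$ for $t\ge 1$, so $r_t$ drops below $1$---and hence equals $0$---after $\mathcal{O}(\log_{\beta^2/(4\sigma^2)}n)=\mathcal{O}(\log_{(1-\alpha)^2/(4\sigma^2)}n)$ rounds, which is the claimed bound.

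The step I expect to be the real obstacle is the self-improving nature of the recursion: the initial hypothesis alone only guarantees a contraction factor of roughly $\beta/2$, which by itself would yield the wrong (and $\sigma$-independent) logarithmic base, so I must verify that a single round already forces $r_t$ into the regime $r_t\le\beta n/2$ in which the sharper factor $4\sigma^2/\beta^2$ takes over and produces the stated base. I also need to check two bookkeeping points: that the condition $\beta-r_t/n>0$ needed to justify the division and squaring is preserved along the entire trajectory, which follows since $r_t$ is strictly decreasing once it is below $\beta n/2$; and that the stated round count is only meaningful when $4\sigma^2<(1-\alpha)^2$, i.e.\ precisely when the log-base exceeds $1$, for larger $\sigma$ the bound is vacuous and there is nothing to prove.
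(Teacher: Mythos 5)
Your proposal is correct and follows essentially the same strategy as the paper: apply the expander mixing lemma to obtain a one-round contraction for the minority color, show that the initial hypothesis forces that color below half of the threshold fraction after a single round, and then iterate the geometric contraction with ratio $4\sigma^2/(1-\alpha)^2$ (resp.\ $4\sigma^2/\alpha^2$), noting as the paper does that the statement is vacuous unless this ratio is below $1$. The only difference is a minor simplification in the first round: you reuse the same one-sided inequality $e(R_{t+1},R_t)>(1-\alpha)d\,r_{t+1}$ together with one application of Lemma~\ref{mixing}, whereas the paper's first round goes through $e(R_0,B_1)\le\frac{1-\alpha}{\alpha}e(B_0,B_1)$ and applies the mixing lemma in both directions; both routes yield the needed bound $r_1\le\frac{1-\alpha}{2}n$.
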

Of course, the two aforementioned statements are sensible if respectively $\frac{\alpha^2}{4\sigma^2}>1$ and $\frac{(1-\alpha)^2}{4\sigma^2}>1$.

\begin{proof}
To prove part (i), let us first show that if $b_0\le \alpha n-\sqrt{2/\alpha}\,\sigma n$, then $b_1\le \frac{\alpha}{2}n$. Since each node $v\in B_1$ has at least $\alpha$ fraction of its neighbors in set $B_0$, we have that $e(R_0,B_1)\le \frac{1-\alpha}{\alpha}e(B_0,B_{1})$. Applying Lemma~\ref{mixing} yields
\[
\frac{r_0b_{1}d}{n}-\sigma d\sqrt{r_0b_1}\le \frac{1-\alpha}{\alpha}(\frac{b_{0}b_1d}{n}+\sigma d\sqrt{b_0b_1}).
\]
Multiplying by $\frac{n}{d\sqrt{b_1}}$ and rearranging the terms give us
\[
\sqrt{b_{1}}(r_0-\frac{1-\alpha}{\alpha}b_0)\le \sigma n(\sqrt{r_0}+\frac{1-\alpha}{\alpha}\sqrt{b_0}).
\]
Since $(\sqrt{r_0}+\frac{1-\alpha}{\alpha}\sqrt{b_0})^2\le ((1+\frac{1-\alpha}{\alpha})\sqrt{n})^2=\frac{n}{\alpha^2}$ and $r_0=n-b_0$, we get
\begin{equation*}
\label{eq 5}
b_1(n-(\frac{1-\alpha}{\alpha}+1)b_0)^2\le\frac{\sigma^2n^3}{\alpha^2} \Rightarrow b_1\le \frac{\sigma^2n^3}{\alpha^2(n-\frac{b_0}{\alpha})^2}\ .
\end{equation*}
The assumption of $b_0\le \alpha n-\sqrt{\frac{2}{\alpha}}\,\sigma n$ implies that
\[
(n-\frac{b_0}{\alpha})^2\ge (n-n+\sqrt{\frac{2}{\alpha^3}}\,\sigma n)^2=\frac{2}{\alpha^3}\,\sigma^2n^2.
\]
Therefore, we have
\[ 
b_1\le \frac{\sigma^2n^3}{\alpha^2(2/\alpha^3)\sigma^2n^2}=\frac{\alpha}{2}n.
\]
So far we proved that after one round there exist at most $\frac{\alpha}{2}n$ blue nodes. Now, we show that if $b_t\le \frac{\alpha}{2}n$ for $t\ge 0$, then $b_{t+1}\le\frac{4\sigma^2}{\alpha^2}b_t$. Thus, the graph becomes fully red in $\mathcal{O}(\log_{\frac{\alpha^2}{4\sigma^2}}n)$ rounds. Since each node in $B_{t+1}$ has at least $\alpha$ fraction of its neighbors in set $B_t$, we have $\alpha d\,b_{t+1}\le e(B_t,B_{t+1})$. Applying Lemma~\ref{mixing} to the right side yields
\[
\alpha d\, b_{t+1}\le\frac{b_tb_{t+1}d}{n}+\sigma d\,\sqrt{b_tb_{t+1}}\Rightarrow \sqrt{b_{t+1}}(\alpha-\frac{b_t}{n})\le\sigma\sqrt{b_t}.
\]
Utilizing $b_t\le \frac{\alpha}{2}n$ implies that
\[
b_{t+1}\le\frac{\sigma^2}{(\alpha-\frac{\alpha}{2})^2}b_t=\frac{4\sigma^2}{\alpha^2}b_t.
\]
Notice that $b_0\ge \alpha n+\sqrt{\frac{2}{1-\alpha}}\,\sigma n$ is equivalent to $r_0\le (1-\alpha)n-\sqrt{\frac{2}{1-\alpha}}\,\sigma n$. Thus, the proof of part (ii) follows by replacing blue with red and $\alpha$ with $1-\alpha$ in the above proof. \qed
\end{proof}
\paragraph{Tightness.} We show that the logarithmic upper bound in Theorem~\ref{thm:alpha} is asymptotically tight; that is, in this setting it might take $\Omega(\log n)$ rounds until the graph is fully red/blue. Assume that $\alpha=1/2$ and let $d$ be a large constant, say $d\ge 100$. Consider a $d$-regular graph $G$ with $\sigma\le 2/\sqrt{d}$, which is known to exist, and even can be constructed explicitly, cf.~\cite{lubotzky1988ramanujan}. Assume that initially all nodes are red except the nodes in distance at most $\ell=\frac{1}{2}\log_d n$ from an arbitrary node $v$. There are at most $d^{k}$ nodes in distance $k$ from $v$. Thus, the number of nodes which are blue initially is at most $\sum_{k=0}^{\ell}d^{k}\le d^{\ell+1}=\mathcal{O}(\sqrt{n})$. We have that 
\[
b_0=\mathcal{O}(\sqrt{n})\le \frac{n}{10}\stackrel{d\ge 100}{\le}\frac{n}{2}-2(\frac{2}{\sqrt{d}})\,n\stackrel{\alpha=1/2,\,\sigma\le 2/\sqrt{d}}{\le}\alpha n-\sqrt{\frac{2}{\alpha}}\,\sigma n.
\]
Thus, by Theorem~\ref{thm:alpha} the graph becomes fully red, but it takes at least $\ell=\Omega(\log n)$ rounds since node $v$ will remain red for the first $\ell$ rounds.

The random $d$-regular graph $\mathcal{G}_{n,d}$ is the random graph with a uniform distribution over all $d$-regular graphs on $n$ nodes, say $[n]=\{1,\cdots,n\}$. It is known~\cite{friedman2003proof} that $\sigma(\mathcal{G}_{n,d})\le2/\sqrt{d}$ for $d\ge 3$ asymptotically almost surely. (We say an event occurs asymptotically almost surely, a.a.s., if it occurs with probability $1-o(1)$ as $n$ tends to infinity.) Putting this statement in parallel with Theorem~\ref{thm:alpha} implies that in the $\alpha$-threshold model on $\mathcal{G}_{n,d}$ for $d\ge 3$, if $b_0\le \alpha n-\sqrt{8/\alpha d}\,n$ then the graph becomes fully red a.s.s. (See~\cite{gartner2018majority} for similar results on the special case of $\alpha=1/2$.)
\subsection{$\mathbf{r}$-Threshold Model}
\label{r-BP}
We first provide Lemma~\ref{lemma 1} and Lemma~\ref{lemma 2} about the structure of a regular graph. Building on these two lemmata, we prove in Theorem~\ref{thm:r} that in the $r$-threshold model on a $d$-regular graph $G$ there is a target set of size at most $2\beta n+1/\beta$ for $\beta=\frac{r}{(1-\sigma)d}$. (We could apply Theorem~\ref{thm:alpha} for $\alpha=r/d$ to find an upper bound of form $(r/d)n+\sqrt{2/(1-r/d)}\,\sigma n$. However as one can observe, this is a weaker bound for most choices of $r$, $d$ and $\sigma$.) 
\begin{lemma}
\label{lemma 1}
In the $r$-threshold model on a $d$-regular graph $G=(V,E)$, there is a stable set of size $s$ for $\beta n\le s\le 2\beta n+1/\beta$.
\end{lemma}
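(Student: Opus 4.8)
The first step is to translate stability into a purely structural condition. A set $S$ is stable exactly when every vertex of $S$ has at least $r$ neighbours inside $S$: if all vertices outside $S$ happen to be red, a vertex $v\in S$ retains its colour only if its $r$ blue neighbours all lie in $S$, and conversely this condition is clearly self-reproducing from one round to the next. Thus the lemma is equivalent to exhibiting a vertex set $S$ with $\beta n\le|S|\le 2\beta n+1/\beta$ whose induced subgraph $G[S]$ has minimum degree at least $r$.

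To produce such a set I would start from an arbitrary set $A$ of size roughly $2\beta n$ and repeatedly delete any vertex having fewer than $r$ neighbours inside the current set; let $S$ be what remains (the \emph{$r$-core} of $G[A]$), which by construction satisfies $\delta(G[S])\ge r$ and is contained in $A$. The size bounds then come from a double use of Lemma~\ref{mixing}. On the one hand, putting $A=B$ in the mixing lemma gives a lower bound on the internal edges of $A$, namely $e(A,A)\ge |A|^2d/n-\sigma d|A|(1-|A|/n)$, so that $A$ begins with comparatively many edges. On the other hand, each deletion removes at most $r-1$ edges, so after discarding $|A|-|S|$ vertices the survivor still obeys $|E(G[S])|\ge |E(G[A])|-(r-1)(|A|-|S|)$; combining this with the mixing upper bound $e(S,S)\le |S|^2d/n+\sigma d|S|(1-|S|/n)$ yields an inequality relating $|S|$ and $|A|$ that I expect to force $|S|\ge\beta n$.

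The upper bound is then immediate, since $S\subseteq A$ and $|A|$ is chosen to be at most $2\beta n+1/\beta$, the additive $1/\beta$ absorbing the rounding needed so that $A$ has integer size while the mixing estimate still guarantees a surviving core. As an independent check that the threshold $\beta n$ is the right scale, it is cleaner to apply Lemma~\ref{mixing} to the pair $(S,V\setminus S)$: every vertex of $S$ sends at most $d-r$ edges outside, so $e(S,V\setminus S)\le(d-r)|S|$, whereas the mixing lemma gives $e(S,V\setminus S)\ge(1-\sigma)|S|(n-|S|)d/n$, and comparing the two shows that no stable set can be substantially smaller than $\beta n$.

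The main obstacle, and the step I would spend the most care on, is the quantitative size control in the peeling argument: I must choose the starting size $|A|$ precisely so that the quadratic inequality arising from the two mixing estimates simultaneously guarantees that the core does not shrink below $\beta n$ (in particular, that it does not vanish) while keeping $|A|\le 2\beta n+1/\beta$. This is delicate because for small $\beta$ the set $A$ is barely above the resolution $\sigma n$ at which the expander mixing lemma becomes informative, so the edge surplus protecting the core against the $r-1$ edges lost per deletion is thin; carrying the constants cleanly through this regime, rather than the reduction to minimum degree, is where the real work lies.
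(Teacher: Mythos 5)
Your reduction of stability to the condition that $G[S]$ have minimum induced degree at least $r$ is correct and matches the paper. But the peeling argument you build on it has a genuine gap that cannot be repaired along the lines you sketch. The trouble is exactly at the point you flag as ``where the real work lies'': at the scale $|A|=2\beta n=\frac{2rn}{(1-\sigma)d}$ the expander mixing lemma gives you nothing. Its lower bound $e(A,A)\ge \frac{|A|^2d}{n}-\sigma d\,|A|\left(1-\frac{|A|}{n}\right)$ is negative, hence vacuous, whenever $|A|<\sigma n$, and $2\beta n<\sigma n$ holds precisely in the typical regime (fixed $r$, $\sigma\approx 2/\sqrt{d}$, $d$ large). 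Concretely, an \emph{arbitrary} set $A$ of size $2\beta n$ can be an independent set --- for fixed $r$ and large $d$ this size is well below the independence number of, say, a random $d$-regular graph --- and then your $r$-core $S$ is empty. The same problem kills the other half of your quadratic inequality: the mixing upper bound $e(S,S)\le \frac{|S|^2d}{n}+\sigma d\,|S|$ is dominated by the $\sigma d\,|S|$ term when $|S|=\Theta(\beta n)$, so even a best-possible (e.g.\ edge-maximizing) choice of $A$ only yields a nonempty core, not one of size $\ge\beta n$. Your ``independent check'' is likewise unsound: comparing $e(S,V\setminus S)\le (d-r)|S|$ with the mixing lower bound gives $|S|\ge \frac{r-\sigma d}{(1-\sigma)d}\,n$, which is vacuous once $\sigma d\ge r$; and the underlying claim is false anyway (for $r=2$ any short cycle is a stable set of size far below $\beta n$). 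The lemma asserts the \emph{existence} of a stable set in a prescribed size window, not a lower bound on all stable sets.

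The paper's proof avoids the mixing lemma entirely and instead uses an extremal choice of the starting set, which is the missing idea. It partitions $V$ into $\lfloor 1/\beta\rfloor$ parts of size roughly $\beta n$ so as to \emph{minimize} the number of edges between parts, and takes $V_{\max}$ to be a largest part (whose size is then pinned between $\beta n$ and $2\beta n+1/\beta$ by counting). If some $u\in V_{\max}$ had $d_{V_{\max}}(u)\le r-1$, then its remaining $d-(r-1)$ edges are spread over only $\lfloor 1/\beta\rfloor-1<\frac{d}{r-1}-1$ other parts, so by averaging some part $V'$ receives at least $r$ of them; moving $u$ to $V'$ strictly decreases the cut, contradicting minimality. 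This local-switching argument is what certifies that the chosen set is edge-dense enough to be stable --- information that the mixing lemma is structurally incapable of providing for sets of size $\Theta(n/d)$.
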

\paragraph{Proof sketch.} The main idea is to consider a partition of the nodes into sets of size roughly $\beta n$ such that the number of edges between the sets is minimized. Then, we show that a set of maximum size is stable otherwise we can move a node from that set to another one and reduce the number of edges in between, which results in a contradiction. The formal proof is given in the appendix, Section~\ref{stable}.
\begin{lemma}
\label{lemma 2}
In a $d$-regular graph $G=(V,E)$, for each set $S\subset V$ of size $s\ge\beta n$, there exists a node $v\in V\setminus S$ so that $d_S(v)\ge r$.
\end{lemma}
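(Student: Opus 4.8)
The plan is to argue by contradiction using the expander mixing lemma (Lemma~\ref{mixing}) applied to $S$ and its complement $\bar S:=V\setminus S$. Suppose, toward a contradiction, that every node $v\in\bar S$ satisfies $d_S(v)\le r-1$. Since each edge leaving $S$ is counted exactly once when we sum $d_S(v)$ over $v\in\bar S$, this immediately yields the upper bound $e(S,\bar S)=\sum_{v\in\bar S}d_S(v)\le (r-1)(n-s)$. The strategy is then to contradict this by producing a lower bound on $e(S,\bar S)$ that is forced by the expansion of $G$, and to check that the two bounds are incompatible precisely when $s\ge\beta n$.

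For the lower bound I would apply Lemma~\ref{mixing} with $A=S$ and $B=\bar S$. The key simplification is that the product of the two correction factors collapses: writing $|A|=s$ and $|B|=n-s$, we have $(1-|A|/n)(1-|B|/n)=\frac{n-s}{n}\cdot\frac{s}{n}$, so the entire square-root term becomes exactly $\sqrt{s(n-s)\cdot\frac{s(n-s)}{n^2}}=\frac{s(n-s)}{n}$. Consequently the mixing lemma gives
\[
e(S,\bar S)\ge \frac{s(n-s)d}{n}-\sigma d\,\frac{s(n-s)}{n}=(1-\sigma)\,\frac{s(n-s)d}{n},
\]
a clean estimate whose factor $1-\sigma$ matches the definition $\beta=\frac{r}{(1-\sigma)d}$.

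Combining the two bounds and dividing through by $n-s>0$ (positive since $S$ is a proper subset of $V$) yields $(1-\sigma)\frac{sd}{n}\le r-1<r$, i.e.\ $s<\frac{rn}{(1-\sigma)d}=\beta n$, which contradicts the hypothesis $s\ge\beta n$. Hence some node of $\bar S$ has at least $r$ neighbors in $S$, as claimed. I do not expect a genuine obstacle here; the argument is short once the sets are chosen correctly. The one point requiring care is the algebraic collapse of the square-root term, since it is exactly this cancellation that produces the $(1-\sigma)$ factor and thus the sharp threshold $\beta n$: any looser handling of that term would only yield a weaker, off-by-a-constant bound on $s$.
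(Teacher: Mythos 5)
Your proof is correct and follows essentially the same route as the paper: both apply the expander mixing lemma to $S$ and $V\setminus S$, exploit the exact collapse of the square-root term to obtain the lower bound $e(S,V\setminus S)\ge(1-\sigma)\frac{s(n-s)d}{n}$, and compare this against the bound $(r-1)(n-s)$ that would hold if no node of $V\setminus S$ had $r$ neighbors in $S$. The paper phrases the final step as a pigeonhole argument rather than an explicit contradiction, but the content is identical.
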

\begin{proof}
By Lemma~\ref{mixing}, we have
\begin{align*}
& \frac{s(n-s)d}{n}-\sigma d\,\sqrt{s(n-s)(\frac{n-s}{n})(\frac{s}{n})}\le e(S,V\setminus S)
\end{align*}
which yields
\[
\frac{s(n-s)(1-\sigma)d}{n}\le e(S,V\setminus S).
\]
Applying $s\ge \beta n>\frac{(r-1)n}{(1-\sigma)d}$ implies that $(r-1)(n-s)<e(S,V\setminus S)$. Thus, there exits a node $v\in V\setminus S$ such that $d_S(v)\ge r$ by the pigeonhole principle. \qed
\end{proof}
\begin{theorem}
\label{thm:r}
In the $r$-threshold model on a $d$-regular graph $G$, there exists a target set of size at most $2\beta n+1/\beta$.
\end{theorem}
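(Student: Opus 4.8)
The plan is to take the stable set $S$ furnished by Lemma~\ref{lemma 1} as the target set $T$. Since $\beta n \le |S| \le 2\beta n + 1/\beta$, this immediately yields the claimed size bound, so the entire task reduces to verifying that $S$ is genuinely a target set: once $S$ is fully blue, the blue color must spread to all of $V$.

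First I would record the combinatorial characterization of stability in the $r$-threshold model, namely that a set $S$ is stable if and only if $d_S(v) \ge r$ for every $v \in S$. Indeed, if this holds then whenever $S$ is blue each of its nodes sees at least $r$ blue neighbors inside $S$ and stays blue, so inductively $S$ stays blue forever; conversely a node with fewer than $r$ neighbors in $S$ would turn red in the coloring where only $S$ is blue. I would also use that the $r$-threshold dynamics are monotone with respect to the initial coloring --- enlarging the set of initially blue nodes can only enlarge the blue set in every later round --- so to certify that $S$ is a target set it suffices to analyze the worst case where $B_0 = S$ and all other nodes start red.

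The heart of the argument is to show that the blue sets then form a chain $S = B_0 \subseteq B_1 \subseteq \cdots$ of stable sets that strictly grows until it fills $V$. I would proceed by induction, maintaining that $B_t$ is stable (so in particular $S \subseteq B_t$). Given this, every node $w \in B_{t+1}$ turned blue because it had at least $r$ neighbors in $B_t$, and since $B_t \subseteq B_{t+1}$ this gives $d_{B_{t+1}}(w) \ge r$, so $B_{t+1}$ is stable as well; stability of $B_t$ simultaneously forces $B_t \subseteq B_{t+1}$. For strict progress, as long as $B_t \ne V$ we have $|B_t| \ge |S| \ge \beta n$, so Lemma~\ref{lemma 2} supplies a node $v \in V \setminus B_t$ with $d_{B_t}(v) \ge r$; this $v$ becomes blue at round $t+1$, whence $B_t \subsetneq B_{t+1}$. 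A strictly increasing chain of subsets of a finite vertex set reaches $V$ after at most $n$ rounds, so $S$ is indeed a target set.

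The main obstacle is the temporal non-monotonicity of the $r$-threshold process: unlike bootstrap percolation, an already-blue node may revert to red, so a priori the blue set could shrink and fall below the threshold $\beta n$ required to invoke Lemma~\ref{lemma 2}. What defuses this is the observation that stability propagates --- each set that becomes blue under the dynamics is itself stable --- which at once guarantees that the blue set never shrinks and that it stays large enough for Lemma~\ref{lemma 2} to keep handing us a new recruit every round.
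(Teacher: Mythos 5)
Your proof is correct and follows essentially the same route as the paper: take the stable set from Lemma~\ref{lemma 1}, use Lemma~\ref{lemma 2} to recruit a new blue node each round, and note that stability propagates so the blue set never shrinks. You simply spell out in full the inductive argument that the paper relegates to a parenthetical remark (and add the harmless monotonicity reduction to the case $B_0=S$), so there is nothing to correct.
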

\begin{proof}
Let $S$ be a stable set of size $s$ such that $\beta n\le s\le 2\beta n+1/\beta$, which must exist by Lemma~\ref{lemma 1}. We claim that if initially all nodes in set $S$ are blue, the whole graph becomes blue eventually. Since $S$ is a stable set, all nodes in $S$ stay blue forever. Furthermore by Lemma~\ref{lemma 2}, in each round at least one more node becomes blue until the whole graph is blue. (Note that by a simple inductive argument, in each round the nodes in set $S$ and the newly added nodes create a stable set and, thus, remain blue forever.) \qed 
\end{proof}
\subsection{Irregular Graphs}
\label{irregular}
To avoid unnecessary technicalities in the proofs, we limited ourselves to regular graphs so far. Now, we argue that our results from Sections~\ref{alpha-BP} and~\ref{r-BP} can be generalized to capture irregular graphs, by applying basically the same proof ideas. All we need to do is to apply a more general variant of Lemma~\ref{mixing}, cf.~\cite{hoory2006expander}, and replace $d$ by $\delta$ or $\Delta$, according to the case, in the proofs. Then, Theorem~\ref{thm:alpha} can be expressed more generally as following. In the $\alpha$-threshold model on a graph $G$ if $b_0\le\underline{b}$, then $G$ becomes fully red in $\mathcal{O}(\log_{\frac{\alpha^2\gamma^2}{4\sigma^2}}n)$ rounds and if $b_0\ge\overline{b}$, then it becomes fully blue in $\mathcal{O}(\log_{\frac{(1-\alpha)^2\gamma^2}{4\sigma^2}}n)$ rounds, where $\gamma=\frac{\delta}{\Delta}$,
\[
\underline{b}=\frac{\gamma^3}{\alpha\gamma^3+(1-\alpha)}\alpha n-\frac{\sqrt{2/\alpha}}{\alpha\gamma^3+(1-\alpha)}\,\sigma n
\]
and 
\[
\overline{b}
=\frac{1}{(1-\alpha)\gamma^3+\alpha}\alpha n+\frac{\sqrt{2/(1-\alpha)}}{(1-\alpha)\gamma^3+\alpha}\,\sigma n.
\]

Furthermore, in the $r$-threshold model on a graph $G$, there is a target set of size at most $2\beta'n+1/\beta'$, where $\beta'=\frac{r}{(1-\frac{\sigma}{\gamma})\delta}$. Note that in the special case of $\gamma=1$, this is equivalent to Theorem~\ref{thm:r} since $\beta'=\beta$.

In the Erd\H{o}s-R\'{e}nyi random graph $\mathcal{G}_{n,p}$ each edge is added independently with probability $p$ on a node set of size $n$, say $[n]$. 
By the results of  Le, Levina, and Vershynin~\cite{le2017concentration}, we know that $\sigma(\mathcal{G}_{n,p})=\mathcal{O}(1/\sqrt{np})$ a.a.s. for $p=\omega(\log n/n)$ (recall that $\log n/n$ is the connectivity threshold). Furthermore, by applying the Chernoff bound~\cite{feller1968introduction}, $(1-\epsilon)np\le d(v)\le (1+\epsilon)np$, for an arbitrarily small constant $\epsilon>0$, with probability $1-\exp(-\omega(\log n))$ for an arbitrary node $v$ in $\mathcal{G}_{n,p}$ if $p=\omega(\log n/n)$. The union bound implies that a.a.s. $1-\epsilon'\le \gamma$ for an arbitrarily small constant $\epsilon'$. By the results from above, in the $\alpha$-threshold model on $\mathcal{G}_{n,p}$ with $p=\omega(\log n/n)$, the minimum size of a target set is in $(1\pm\epsilon^{\prime\prime})\alpha n$ a.a.s. for an arbitrarily small constant $\epsilon^{\prime\prime}>0$.
\section{Complexity Results}
\label{complexity}
Let $MS_{\alpha}(G)$ and $MS_{r}(G)$ respectively denote the minimum size of a stable set in $\alpha$-threshold and $r$-threshold on a graph $G$.
\begin{theorem}
For any constant $0<\alpha<1$, the problem of determining $MS_{\alpha}(G)$ for a given graph $G$ is NP-hard. 
\end{theorem}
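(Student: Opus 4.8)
I would first reduce stability to a purely local condition: a nonempty set $S$ is stable in the $\alpha$-threshold model if and only if
\[
d_S(v)\ge \alpha\, d(v)\qquad\text{for every } v\in S .
\]
The reason is that the most adversarial coloring colors all of $V\setminus S$ red; then a node $v\in S$ has exactly $d_S(v)$ blue neighbors, so it stays blue after one round precisely when $d_S(v)\ge \alpha\, d(v)$. If this holds for all $v\in S$, then from this worst case $S$ reproduces itself, and any other coloring of $V\setminus S$ only increases each node's blue-degree, so $S$ survives against every adversary; conversely a single violated inequality lets the all-red-outside adversary destroy $S$. Hence $MS_{\alpha}(G)$ equals the minimum size of a nonempty vertex set satisfying the displayed relative-degree condition.

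The next observation is that whenever every relevant node has degree $d$ with $\alpha d\in(r-1,r]$ for an integer $r$, the condition $d_S(v)\ge\alpha\, d(v)$ is equivalent to the \emph{integer} condition $d_S(v)\ge r$. Thus on such degree-calibrated graphs the minimum stable set is exactly a minimum-size nonempty set inducing minimum degree at least $r$, which is precisely the $r$-threshold minimum stable set problem, already known to be NP-hard for $r\ge 3$. The plan is therefore to turn an instance $H$ of this fixed-threshold problem into a graph $G$ in which the \emph{fractional} threshold collapses to the constant $r$.

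Concretely, I would fix an integer $r\ge 3$ together with a target degree $D$ satisfying $\alpha D\in(r-1,r]$ (such $D$ exists because the admissible window $((r-1)/\alpha,\,r/\alpha]$ has width $1/\alpha>1$ and hence contains an integer), chosen large enough to dominate the maximum degree of $H$. I then pad every original node $v$ up to degree exactly $D$ by adding $D-d_H(v)$ edges to fresh \emph{anchor} vertices, and I give the anchors a very large degree $D'$ with $\alpha D'>k$ (where $k$ is the size bound in the decision instance) by embedding them in a large dense auxiliary block. With this calibration every original node has threshold exactly $r$, while every auxiliary node has threshold exceeding $k$; consequently no stable set of size at most $k$ can contain an anchor, and for an original node only its neighbors inside $H$ can count toward its threshold. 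A routine argument then yields $MS_{\alpha}(G)\le k$ if and only if $H$ has a subgraph of minimum degree $\ge r$ on at most $k$ vertices.

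The main obstacle is this final calibration step, which has two parts that must hold simultaneously for every fixed $0<\alpha<1$: (i) the window $((r-1)/\alpha,\,r/\alpha]$ must admit a common padded degree $D$ that is at least the maximum degree of the source instances, which forces either a degree bound on the source or a compensating choice of $r$; and (ii) the auxiliary gadget must introduce \emph{no} spurious stable set of size at most $k$ and must never assist an original node in meeting its threshold. Point (ii) is exactly why the anchors are given threshold larger than $k$, ruling out any cheap stable set living in, or borrowing density from, the padding. Checking that (i) and (ii) are compatible and that the construction stays polynomial in the size of $H$ is the crux; once this is settled, the equivalence $MS_{\alpha}(G)\le k \Leftrightarrow MS_{r}(H)\le k$ follows immediately.
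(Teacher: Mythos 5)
Your reduction has a genuine gap at the calibration step you yourself flag as ``the crux,'' and as stated it cannot be completed. You need a padded degree $D$ with $\lceil \alpha D\rceil = r$ for a \emph{fixed} $r\ge 3$, which forces $D\le r/\alpha$, a constant depending only on $\alpha$ and $r$; but you also need $D\ge \Delta(H)$ to be able to pad every node of the source instance up to degree exactly $D$. These two requirements are compatible only if $\Delta(H)\le r/\alpha=\mathcal{O}(1)$, i.e.\ only if the minimum-subgraph-of-minimum-degree-$r$ problem is NP-hard on graphs of constant maximum degree. The result of Amini et al.\ that the paper (and you) would invoke is stated for general graphs, so this bounded-degree hardness would need a separate proof. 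The apparent escape of letting $r$ grow with the instance (take $r\approx\alpha\Delta(H)$) does not work either: you would then be reducing from instances of the $r$-threshold problem in which $r$ is tied to $\Delta(H)$, and the hardness of that restricted family is again not established. The rest of your construction (the local characterization of stability, the high-degree anchors that cannot join any small stable set, the polynomiality of the gadget) is fine; the reduction simply has no valid source of hard instances.

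The paper sidesteps this issue by choosing a source problem whose natural parameter scales with the instance: it reduces from \textsc{$\alpha$-Clique} (itself shown NP-hard from \textsc{Clique} by a padding argument). The constructed graph $G'$ is $n$-regular by design, so \emph{every} node has threshold exactly $\lceil \alpha n\rceil$, and a stable set of size $\lceil \alpha n\rceil+1$ is forced to be a clique; auxiliary blocks (two empty copies, two copies of $G$, and four apex nodes) both enforce regularity and confine any small stable set to a single copy of $G$. Because the threshold $\lceil\alpha n\rceil$ grows with $n$ in lockstep with the clique size sought, no bounded-window degree calibration is ever needed. If you want to salvage your route, you would have to either prove hardness of the degree-$r$ subgraph problem on graphs of maximum degree at most $r/\alpha$, or switch to a source problem, as the paper does, in which the threshold scales with the instance size.
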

\begin{proof}
We provide a reduction from \textsc{$\alpha$-Clique}, which is the problem of deciding whether a given graph $G$ has a clique of size at least $\alpha n$ or not. \textsc{$\alpha$-Clique} is NP-hard for any constant $0<\alpha<1$ by a simple reduction from \textsc{Clique}, which is one of Karp's 21 NP-complete problems. For the sake of completeness, we present this reduction in the appendix, Section~\ref{clique}. 

Let $G=(V=\{v_1,\cdots,v_n\},E)$ be an instance of \textsc{$\alpha$-Clique}. We construct a graph $G'=(V',E')$, where $V':=\bigcup_{j=1}^{4}V^{(j)}\cup\{w_1,w_2,w_3,w_4\}$ for $V^{(j)}:=\{v^{(j)}_{i}:1\le i\le n\}$. For the edge set, assume that the induced subgraph $G'[V^{(j)}]$ is an empty graph for $j=2,3$ and the induced subgraph $G'[V^{(j)}]$ is a copy of $G$ for $j=1,4$. We also connect node $w_j$ to all nodes in $V^{(j)}$ for $1\le j\le 4$. Moreover, we connect nodes $v_{i}^{(1)}$, $v_{i'}^{(2)}$, and similarly nodes $v_{i}^{(3)}, v_{i'}^{(4)}$, for $1\le i\ne i'\le n$ if $\{v_i,v_{i'}\}\notin E$. Finally, we add an edge between $v_i^{(2)}, v_{i'}^{(3)}$ for $1\le i\ne i'\le n$ if $\{v_i,v_{i'}\}\in E$. 

We claim that $G$ has a clique of size at least $\alpha n$ if and only if $MS_{\alpha}(G')=\lceil \alpha n\rceil+1$. Before proving this claim, let us make the following simple, however very useful, observation. A node set $S$ is stable in the $\alpha$-threshold model if and only if each node $v$ in $S$ has at least $\lceil \alpha d(v)\rceil$ neighbors in $S$.

The first direction of the claim is quite straightforward. Assume that $G$ has a clique of size at least $\alpha n$. This implies that it includes a clique of size $\lceil \alpha n\rceil$.
There is a clique of the same size in the copy of $G$ on $V^{(1)}$. Let $S$ be the node set obtained by adding $w_1$ to the node set of this clique. Set $S$ is stable since $G'$ is $n$-regular by construction and each node $v\in S$ has at least $\lceil \alpha n\rceil$ neighbors in $S$. Note that by the above observation an $n$-regular graph cannot have a stable set of size $\lceil \alpha n\rceil$ or smaller. Thus, $MS_{\alpha}(G')=\lceil \alpha n\rceil+1$.

Assume that $MS_{\alpha}(G')=\lceil \alpha n\rceil+1$. This implies that $G'$ includes a stable set $S$ of size $\lceil \alpha n\rceil+1$. The induced subgraph by $S$ must be a clique since for $S$ to be stable each node in $S$ has to be connected to $\lceil \alpha n\rceil$ nodes in $S$. Now, we prove that $G$ includes a clique of size at least $\alpha n$. Set $S$ includes at most one node from $V^{(2)}$ (similarly $V^{(3)}$) because $G'[V^{(2)}]$ (respectively $G'[V^{(3)}]$) is an empty graph. Furthermore, it is not possible that $S$ has exactly one node $v\in V^{(2)}$ and one node $v'\in V^{(3)}$ because $S$ must include a node, say $v^{\prime\prime}$, from $V^{(1)}$ or $V^{(4)}$, which cannot be connected to both $v$ and $v^{\prime}$. It is also easy to see that $w_2$ and $w_3$ cannot be in $S$. Therefore by symmetry, we are left with the two following cases: (i) $S$ includes one node from $V^{(2)}$ and the rest of nodes from $V^{(1)}$ or (ii) all nodes from $V^{(1)}\cup\{w_1\}$. Both cases imply that a subset of nodes in $V^{(1)}$ induces a clique of size at least $\alpha n$; thus, $G$ has also a clique of the same size since $G'[V^{(1)}]$ is a copy of $G$. \qed
\end{proof}

We observe that in the $r$-threshold model on a graph $G=(V,E)$, a set $S$ is stable if and only the induced subgraph $G[S]$ has minimum degree at least $r$. Thus, $MS_1(G)=2$ if $E\ne \emptyset$. Furthermore, $MS_2(G)$ is equal to the length of the shortest cycle in $G$, i.e., the girth of $G$. Therefore, the problem of determining $MS_r(G)$ for a given graph $G$ is in P if $r=1,2$. However for $r\ge 3$, the problem is NP-hard.
\begin{theorem} (Amini, Peleg, Pérennes, Sau, and Saurabh~\cite{amini2012approximability})
The problem of determining the minimum size of a set whose induced subgraph has minimum degree at least $r$ in a given graph $G$ does not admit any constant-factor approximation for $r\ge 3$, unless P = NP.
\end{theorem}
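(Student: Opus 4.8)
The plan is to establish a fixed multiplicative gap for the decision version of the problem and then amplify that gap to an arbitrary constant by a self-improving product construction; this is the canonical route for ruling out any constant-factor approximation, since a single constant gap only yields APX-hardness. Throughout, recall (from the observation preceding the statement) that a set $S$ induces a subgraph of minimum degree at least $r$ exactly when $S$ is a stable set in the $r$-threshold model, so this is really an inapproximability statement about $MS_r(G)$ for $r \ge 3$.

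First I would fix $r \ge 3$ and build a gap-producing reduction from an NP-hard problem (for instance a PCP-hardened version of $3$-SAT, or directly from vertex cover) that outputs, in polynomial time, a graph $G$ and a threshold $k$ such that it is NP-hard to distinguish the case where some set of size at most $k$ induces minimum degree $\ge r$ from the case where every such set has size at least $(1+\epsilon)k$, for some absolute constant $\epsilon>0$. The gadget must encode the constraint structure into the requirement that each retained vertex keep at least $r$ retained neighbors; the natural device is to use (nearly) $r$-regular gadgets so that ``surviving with degree $\ge r$'' forces a consistent choice of literals and nothing cheaper.

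Second I would amplify this constant gap by defining a product operation $G \mapsto G^{(t)}$ on instances for which the minimum size of an induced subgraph of minimum degree $\ge r$ is (essentially) $\mathrm{OPT}(G)^{t}$. The point is to arrange that a tuple of vertices survives with induced degree $\ge r$ in $G^{(t)}$ if and only if each of its coordinates survives with degree $\ge r$ in the corresponding factor, so that optimal solutions factor as Cartesian products of per-factor optima. Under such a product the gap $(1+\epsilon)$ is raised to the $t$-th power, and $(1+\epsilon)^{t}$ exceeds any fixed $\rho$ once $t > \log\rho / \log(1+\epsilon)$. Given a hypothetical polynomial-time $\rho$-approximation, I would run it on $G^{(t)}$ with $t=\Theta(\log\rho)$; its output would separate the two sides of the amplified gap and thereby decide the original NP-hard distinguishing problem in polynomial time, forcing $\mathrm{P}=\mathrm{NP}$.

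The hard part will be the product construction in the second step: making the fixed-threshold constraint ``minimum degree $\ge r$'' behave multiplicatively is delicate, since the standard graph products (tensor, strong, lexicographic) transform degrees in ways that do not cleanly preserve a constant threshold $r$. The construction has to be engineered so that no ``mixed'' subset --- one that is not a Cartesian product of per-factor solutions --- can meet the degree requirement more cheaply, and this is precisely where the regularity (and, ideally, expansion) built into the first-step gadgets must be exploited to lower-bound the cost of any deviating subset.
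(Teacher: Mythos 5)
This theorem is not proved in the paper at all: it is imported verbatim from Amini, Peleg, P\'erennes, Sau, and Saurabh~\cite{amini2012approximability}, and the paper's only contribution here is the observation that a stable set in the $r$-threshold model is exactly a set inducing minimum degree at least $r$, so the cited inapproximability transfers to $MS_r(G)$. There is therefore no in-paper proof to compare yours against, and your attempt has to be judged as a standalone reconstruction of the external result.

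As such a reconstruction, it has a genuine gap rather than being a complete alternative proof. Your two-step outline (constant-gap reduction, then gap amplification via a product that makes the optimum multiplicative) is indeed the canonical template for ruling out all constant factors, and it is in the spirit of how such results are typically obtained. But neither step is actually carried out. In step one you do not exhibit the gadget: for minimum degree at least $3$ the difficulty is precisely that the cheapest locally feasible structures are small $r$-regular subgraphs, and you need gadgets in which no short ``cheap cycle of survival'' exists that bypasses the intended encoding --- you assert that near-$r$-regular gadgets ``force a consistent choice of literals'' without showing why no small unintended subgraph of minimum degree $r$ hides inside the construction. In step two you explicitly concede that the product construction is the hard part and leave it unresolved; this is not a minor detail, because as you yourself note, none of the standard graph products preserves a fixed degree threshold $r$, and without a concrete product for which mixed (non-Cartesian) feasible sets are provably no cheaper, the amplification claim $(1+\epsilon)\mapsto(1+\epsilon)^t$ has no basis. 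A proof that identifies its own missing core and stops there is a plan, not a proof. If you want to pursue this, the published argument of Amini et al.\ is the place to look for how the amplification is actually engineered; alternatively, for the purposes of this paper it suffices to cite the result, as the author does.
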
 
\bibliographystyle{acm}
\bibliography{refer} %
%\bibliography{refer}
\appendix
\section{Proof of Lemma~\ref{lemma 1}}
\label{stable}
Let $G=(V,E)$ be a $d$-regular graph. We prove that in the $r$-threshold model on $G$, there is a stable set of size $\beta n\le s\le 2\beta n+1/\beta$, where $\beta=r/(1-\sigma)d
$. Assume that $\mathcal{P}$ is the set of all partitions of $V$ into $\lfloor 1/\beta\rfloor$ sets such that all sets are of size at least $\lfloor \beta n\rfloor$, except one set which can be of size $\lfloor \beta n\rfloor-1$. Let $P\in \mathcal{P}$ be a partition for which the number of edges between the sets is minimized. Let $V_{\max}$ be a set of maximum size in $P$. Clearly, $V_{\max}$ is at least of size $\beta n$ and at most of size
\[
n-(\lfloor \frac{1}{\beta}\rfloor-2)\lfloor \beta n\rfloor-(\lfloor\beta n\rfloor-1)=n-\lfloor \frac{1}{\beta}\rfloor\lfloor \beta n\rfloor+\lfloor \beta n\rfloor+1\le n-(\frac{1}{\beta}-1)(\beta n-1)+\beta n+1.
\]  
which is equal to $2\beta n+1/\beta$. Therefore, $\beta n\le |V_{\max}|\le 2\beta n+1/\beta$. Furthermore, we claim that for each node $v\in V_{\text{max}}$, $d_{V_{\text{max}}}(v)\ge r$, which implies that $V_{\max}$ is a stable set. Assume that there is a node $u$ which violates this property, i.e., $d_{V_{\max}}(u)\le r-1$. Then, the average number of edges between $u$ and the $\lfloor 1/\beta\rfloor-1$ other sets is at least
\[
\frac{d-(r-1)}{\lfloor 1/\beta\rfloor-1}\ge \frac{d-(r-1)}{\frac{(1-\sigma)d}{r}-1}>\frac{d-(r-1)}{\frac{d}{r-1}-1}=r-1.
\] 
Thus, there must exist a set $V'$ among the other $\lfloor 1/\beta\rfloor-1$ sets such that $d_{V'}(u)\ge r$. This is a contradiction because by removing $u$ from $V_{\max}$ and adding it into $V'$, the number of edges between the sets decreases at least by one and it is easy to see that the new partition is also in $\mathcal{P}$.
\section{\textsc{$\alpha$-Clique} Is NP-hard}
\label{clique}
We prove that {$\alpha$-Clique} is NP-hard by a reduction from \textsc{Clique}. This statement is known by prior work for the special case of $\alpha=1/2$, cf.~\cite{lu2013maximum}. Let us first define both problems formally.
\\
\\
\textsc{$\alpha$-Clique}\\
\textit{Instance}: Graph $G$.\\
\textit{Output}:~Does $G$ have a clique of size at least~$\alpha n$?
\\
\\
\textsc{Clique}\\
\textit{Instance}: Graph $G$ and integer $k$.\\
\textit{Output}:~Does~$G$~have~a~clique~of~size~at~least~$k$?

\begin{claim}
\textsc{$\alpha$-Clique} is NP-hard for any constant $0<\alpha<1.$
\end{claim}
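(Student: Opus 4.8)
The plan is to reduce from \textsc{Clique}. The only real obstacle is that \textsc{Clique} lets the target size $k$ be arbitrary, whereas \textsc{$\alpha$-Clique} insists that the target be exactly the fixed fraction $\alpha$ of the total number of vertices. So given an instance $(G,k)$ of \textsc{Clique} with $|V(G)|=n$ (and $1\le k\le n$, the other cases being trivial), I would build a graph $H$ by \emph{padding} $G$ so that the ratio between the clique threshold and the vertex count is forced to equal $\alpha$. Concretely, I attach two gadgets: a set $U$ of $p$ \emph{universal} vertices that form a clique and are joined to every vertex of $G$ and to one another, and a set $I$ of $q$ \emph{isolated} vertices with no incident edges. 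Put $N:=n+p+q=|V(H)|$.

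The first key step is to pin down $\omega(H)$, the maximum clique size of $H$. Every universal vertex is adjacent to all others, so the universal clique extends any clique of $G$, while the isolated vertices can only appear as singletons in a clique. Hence, as soon as $p\ge 1$ (or even $p=0$, using $\omega(G)\ge 1$ for nonempty $G$), one gets $\omega(H)=\omega(G)+p$. The second step is to choose $p$ and $q$ so that the threshold $\alpha N$ lands exactly at $k+p$; that is, I require
\[
\alpha(n+p+q)=k+p.
\]
With this choice $\alpha N=k+p$ is automatically an integer, and the equivalence falls out immediately: $H$ has a clique of size at least $\alpha N$ iff $\omega(G)+p\ge k+p$ iff $G$ has a clique of size at least $k$. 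This settles both directions of the reduction at once.

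The step I expect to require the most care is producing nonnegative \emph{integer} values of $p$ and $q$ satisfying the displayed equation across all regimes of $k$ relative to $\alpha n$, while keeping $H$ of polynomial size. Writing $\alpha=a/b$ in lowest terms, the equation becomes the single linear Diophantine constraint $b(k+p)=a(n+p+q)$, and solving for $q$ shows that integrality reduces to the congruence $a\mid (k+p)$, which (since $\gcd(a,b)=1$) I can satisfy by taking $p$ in the residue class $-k \pmod a$. Because the coefficient $b-a$ of $p$ is positive, increasing $p$ within this class eventually forces $q\ge 0$; choosing the least such $p$ keeps both $p$ and $q$ of order $O(n)$, so $N=O(n)$ and the construction is polynomial. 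Intuitively, the universal vertices push the ratio up when $k<\alpha n$ and the isolated vertices pull it down when $k>\alpha n$, and the Diophantine bookkeeping merely confirms that the two can always be balanced to hit $\alpha$ exactly. (If $\alpha$ is irrational one instead targets $\lceil\alpha N\rceil=k+p$ and picks $N$ in the corresponding short interval; the same gadgets work.)
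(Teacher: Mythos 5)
Your proof is correct and takes essentially the same route as the paper's: the paper pads with isolated vertices when $k\ge \alpha n$ and with universal vertices when $k<\alpha n$, which are exactly your two gadgets, just deployed in two separate cases rather than simultaneously. Your Diophantine treatment of the integrality of the pad sizes is in fact more careful than the paper's, which simply assumes the relevant quantities are integers ``for simplicity.''
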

\begin{proof}
Let $n$-node graph $G=(V,E)$ and integer $k$ be an instance of \textsc{Clique}. We construct a graph $G'=(V',E')$ with $n'$ nodes such that $G$ has a clique of size at least $k$ if and only if $G'$ has a clique of size at least $\alpha n'$. We distinguish two cases.
\begin{itemize}
\item If $k\ge \alpha n$, then we add $\frac{k}{\alpha}-n$ isolated nodes to $G$ to obtain $G'$. (For simplicity in calculations we assume that $\frac{k}{\alpha}-n$ is an integer, otherwise basically the same argument, with a bit of lengthier calculations, works for $\lfloor \frac{k}{\alpha}-n\rfloor$.) 

Assume that $G$ has a clique of size at least $k$, then there is a copy of this clique in $G'$ as well. We have that $\frac{k}{n'}=\alpha$ since $n'=n+\frac{k}{\alpha}-n=\frac{k}{\alpha}$. Therefore, $G'$ has a clique of size at least $\alpha n'$.

Assume that $G'$ has a clique of size at least $\alpha n'=\alpha\,\frac{k}{\alpha}=k$, then $G$ has a clique of the same size.

\item If $k<\alpha n$, to obtain $G'$, we add a node set $U$ of size $\frac{\alpha n-k}{1-\alpha}$ to $G$ and connect each node in $U$ to all other nodes in $U$ and all nodes in $G$. (Again for the sake of simplicity, we assume that $\frac{\alpha n-k}{1-\alpha}$ is an integer, but the same argument works for $\lceil\frac{\alpha n-k}{1-\alpha}\rceil$ in general case.) Thus, the number of nodes in $G'$ is equal to 
\[
n'=n+\frac{\alpha n-k}{1-\alpha}=\frac{(1-\alpha)n+\alpha n-k}{1-\alpha}=\frac{n-k}{1-\alpha}.
\]

Assume that $G$ has a clique of size $k$. There is a clique of the same size in the copy of $G$ in $G'$. By adding all nodes in $U$ to this clique, we get a clique of size $k+\frac{\alpha n-k}{1-\alpha}=\frac{\alpha(n-k)}{1-\alpha}=\alpha n'$.

If $G'$ has a clique of size at least $\alpha n'$, then at most $\frac{\alpha n-k}{1-\alpha}$ of the nodes can be from $U$ and the rest must be from the copy of $G$. Thus, $G$ contains a clique of size at least
\[
\alpha n'-\frac{\alpha n-k}{1-\alpha}=\frac{\alpha(n-k)}{1-\alpha}+\frac{k-\alpha n}{1-\alpha}=k.
\]  \qed
\end{itemize}
\end{proof}
\end{document}